\documentclass[pra,aps,showpacs,twocolumn,twoside,superscriptaddress]{revtex4}



\usepackage{amsmath,amsfonts,amssymb,caption,color,epsfig,graphics,graphicx,hyperref,latexsym,mathrsfs,revsymb,theorem,url,verbatim,epstopdf}

\hypersetup{colorlinks,linkcolor={blue},citecolor={red},urlcolor={blue}}

\newtheorem{definition}{Definition}
\newtheorem{proposition}[definition]{Proposition}
\newtheorem{lemma}[definition]{Lemma}

\newtheorem{theorem}[definition]{Theorem}
\newtheorem{corollary}[definition]{Corollary}
\newtheorem{conjecture}[definition]{Conjecture}

\newtheorem{remark}[definition]{Remark}
\newtheorem{example}[definition]{Example}
\newtheorem{question}[definition]{Question}

\def\bcj{\begin{conjecture}}
\def\ecj{\end{conjecture}}
\def\bcr{\begin{corollary}}
\def\ecr{\end{corollary}}
\def\bd{\begin{definition}}
\def\ed{\end{definition}}
\def\bea{\begin{eqnarray}}
\def\eea{\end{eqnarray}}
\def\bem{\begin{enumerate}}
\def\eem{\end{enumerate}}
\def\bex{\begin{example}}
\def\eex{\end{example}}
\def\bim{\begin{itemize}}
\def\eim{\end{itemize}}
\def\bl{\begin{lemma}}
\def\el{\end{lemma}}
\def\bma{\begin{bmatrix}}
\def\ema{\end{bmatrix}}
\def\bpf{\begin{proof}}
\def\epf{\end{proof}}
\def\bpp{\begin{proposition}}
\def\epp{\end{proposition}}
\def\bqu{\begin{question}}
\def\equ{\end{question}}
\def\br{\begin{remark}}
\def\er{\end{remark}}
\def\bt{\begin{theorem}}
\def\et{\end{theorem}}


\def\squareforqed{\hbox{\rlap{$\sqcap$}$\sqcup$}}
\def\qed{\ifmmode\squareforqed\else{\unskip\nobreak\hfil
\penalty50\hskip1em\null\nobreak\hfil\squareforqed
\parfillskip=0pt\finalhyphendemerits=0\endgraf}\fi}
\def\endenv{\ifmmode\;\else{\unskip\nobreak\hfil
\penalty50\hskip1em\null\nobreak\hfil\;
\parfillskip=0pt\finalhyphendemerits=0\endgraf}\fi}
\newenvironment{proof}{\noindent \textbf{{Proof.~} }}{\qed}
\def\Dbar{\leavevmode\lower.6ex\hbox to 0pt
{\hskip-.23ex\accent"16\hss}D}
\makeatletter
\def\url@leostyle{%
  \@ifundefined{selectfont}{\def\UrlFont{\sf}}{\def\UrlFont{\small\ttfamily}}}
\makeatother
\urlstyle{leo}

\def\bcj{\begin{conjecture}}
\def\ecj{\end{conjecture}}
\def\bcr{\begin{corollary}}
\def\ecr{\end{corollary}}
\def\bd{\begin{definition}}
\def\ed{\end{definition}}
\def\bea{\begin{eqnarray}}
\def\eea{\end{eqnarray}}
\def\bem{\begin{enumerate}}
\def\eem{\end{enumerate}}
\def\bex{\begin{example}}
\def\eex{\end{example}}
\def\bim{\begin{itemize}}
\def\eim{\end{itemize}}
\def\bl{\begin{lemma}}
\def\el{\end{lemma}}
\def\bpf{\begin{proof}}
\def\epf{\end{proof}}
\def\bpp{\begin{proposition}}
\def\epp{\end{proposition}}
\def\bqu{\begin{question}}
\def\equ{\end{question}}
\def\br{\begin{remark}}
\def\er{\end{remark}}
\def\bt{\begin{theorem}}
\def\et{\end{theorem}}

\def\btb{\begin{tabular}}
\def\etb{\end{tabular}}

\newcommand{\nc}{\newcommand}


\def\l{\lambda}
\def\m{\mu}

\def\s{\sigma}

 \nc{\bbA}{\mathbb{A}} \nc{\bbB}{\mathbb{B}} \nc{\bbC}{\mathbb{C}}
 \nc{\bbD}{\mathbb{D}} \nc{\bbE}{\mathbb{E}} \nc{\bbF}{\mathbb{F}}
 \nc{\bbG}{\mathbb{G}} \nc{\bbH}{\mathbb{H}} \nc{\bbI}{\mathbb{I}}
 \nc{\bbJ}{\mathbb{J}} \nc{\bbK}{\mathbb{K}} \nc{\bbL}{\mathbb{L}}
 \nc{\bbM}{\mathbb{M}} \nc{\bbN}{\mathbb{N}} \nc{\bbO}{\mathbb{O}}
 \nc{\bbP}{\mathbb{P}} \nc{\bbQ}{\mathbb{Q}} \nc{\bbR}{\mathbb{R}}
 \nc{\bbS}{\mathbb{S}} \nc{\bbT}{\mathbb{T}} \nc{\bbU}{\mathbb{U}}
 \nc{\bbV}{\mathbb{V}} \nc{\bbW}{\mathbb{W}} \nc{\bbX}{\mathbb{X}}
 \nc{\bbZ}{\mathbb{Z}}


 \nc{\bA}{{\bf A}} \nc{\bB}{{\bf B}} \nc{\bC}{{\bf C}}
 \nc{\bD}{{\bf D}} \nc{\bE}{{\bf E}} \nc{\bF}{{\bf F}}
 \nc{\bG}{{\bf G}} \nc{\bH}{{\bf H}} \nc{\bI}{{\bf I}}
 \nc{\bJ}{{\bf J}} \nc{\bK}{{\bf K}} \nc{\bL}{{\bf L}}
 \nc{\bM}{{\bf M}} \nc{\bN}{{\bf N}} \nc{\bO}{{\bf O}}
 \nc{\bP}{{\bf P}} \nc{\bQ}{{\bf Q}} \nc{\bR}{{\bf R}}
 \nc{\bS}{{\bf S}} \nc{\bT}{{\bf T}} \nc{\bU}{{\bf U}}
 \nc{\bV}{{\bf V}} \nc{\bW}{{\bf W}} \nc{\bX}{{\bf X}}
 \nc{\bZ}{{\bf Z}}


\nc{\cA}{{\cal A}} \nc{\cB}{{\cal B}} \nc{\cC}{{\cal C}}
\nc{\cD}{{\cal D}} \nc{\cE}{{\cal E}} \nc{\cF}{{\cal F}}
\nc{\cG}{{\cal G}} \nc{\cH}{{\cal H}} \nc{\cI}{{\cal I}}
\nc{\cJ}{{\cal J}} \nc{\cK}{{\cal K}} \nc{\cL}{{\cal L}}
\nc{\cM}{{\cal M}} \nc{\cN}{{\cal N}} \nc{\cO}{{\cal O}}
\nc{\cP}{{\cal P}} \nc{\cQ}{{\cal Q}} \nc{\cR}{{\cal R}}
\nc{\cS}{{\cal S}} \nc{\cT}{{\cal T}} \nc{\cU}{{\cal U}}
\nc{\cV}{{\cal V}} \nc{\cW}{{\cal W}} \nc{\cX}{{\cal X}}
\nc{\cZ}{{\cal Z}}


\nc{\hA}{{\hat{A}}} \nc{\hB}{{\hat{B}}} \nc{\hC}{{\hat{C}}}
\nc{\hD}{{\hat{D}}} \nc{\hE}{{\hat{E}}} \nc{\hF}{{\hat{F}}}
\nc{\hG}{{\hat{G}}} \nc{\hH}{{\hat{H}}} \nc{\hI}{{\hat{I}}}
\nc{\hJ}{{\hat{J}}} \nc{\hK}{{\hat{K}}} \nc{\hL}{{\hat{L}}}
\nc{\hM}{{\hat{M}}} \nc{\hN}{{\hat{N}}} \nc{\hO}{{\hat{O}}}
\nc{\hP}{{\hat{P}}} \nc{\hR}{{\hat{R}}} \nc{\hS}{{\hat{S}}}
\nc{\hT}{{\hat{T}}} \nc{\hU}{{\hat{U}}} \nc{\hV}{{\hat{V}}}
\nc{\hW}{{\hat{W}}} \nc{\hX}{{\hat{X}}} \nc{\hZ}{{\hat{Z}}}

\nc{\hn}{{\hat{n}}}


























\def\diag{\mathop{\rm diag}}


\def\lin{\mathop{\rm span}}





\def\sr{\mathop{\rm sr}}



\def\dg{\dagger}

\newcommand{\ket}[1]{|#1\rangle}
\newcommand{\proj}[1]{| #1\rangle\!\langle #1 |}
\newcommand{\ketbra}[2]{|#1\rangle\!\langle#2|}


















\def\Dbar{\leavevmode\lower.6ex\hbox to 0pt
{\hskip-.23ex\accent"16\hss}D}

\begin{document}
\title{Constructing three-qubit unitary gates in terms of Schmidt rank and CNOT gates}

\date{\today}

\pacs{03.65.Ud, 03.67.Mn}

\author{Zhiwei Song}
\affiliation{School of Mathematical Sciences, Beihang University, Beijing 100191, China}
\affiliation{Department of Applied Mechanics,  University of Science and Technology Beijing, Beijing 100083, China}

\author{Lin Chen}\email[]{linchen@buaa.edu.cn (corresponding author)}
\affiliation{School of Mathematical Sciences, Beihang University, Beijing 100191, China}
\affiliation{International Research Institute for Multidisciplinary Science, Beihang University, Beijing 100191, China}

\author{Mengyao Hu}\email[]{mengyaohu@buaa.edu.cn (corresponding author)}
\affiliation{School of Mathematical Sciences, Beihang University, Beijing 100191, China}

\begin{abstract}
It is known that every two-qubit unitary operation has Schmidt rank one, two or four, and the construction of three-qubit unitary gates in terms of Schmidt rank remains an open problem. We explicitly construct the gates of Schmidt rank from one to seven. It turns out that the three-qubit Toffoli and Fredkin gate respectively have Schmidt rank two and four. As an application, we implement the gates using quantum circuits of CNOT gates and local Hadamard and flip gates. In particular, the collective use of three CNOT gates can generate a three-qubit unitary gate of Schmidt rank seven in terms of the known Strassen tensor from multiplicative complexity. Our results imply the connection between the number of CNOT gates for implementing multiqubit gates and their Schmidt rank.
\end{abstract}

\maketitle



\section{Introduction}

The implementation of multiqubit unitary gates is one of the central problems in quantum computing \cite{PhysRevA.52.3457,Chau1995Simple,Smolin1996Five,Yu2015Optimal}.  It has been shown that every two-qubit unitary operation has Schmidt rank one, two or four \cite{Nielsen03}. 
The Schmidt rank plays a key role when determining whether a bipartite unitary operation is a controlled unitary operation \cite{cy13,cy14,cy14ap}, the decomposition of multipartite unitary gates into the product of controlled unitary gates for implementing efficiently quantum circuits \cite{cy15}, and the derivation of entangling power of bipartite unitaries for quantifying how much entanglement they can create locally \cite{cy16,cy16b}. 

As far as we know, it's an open problem of characterizing multiqubit unitary operations in terms of Schmidt rank. In this paper, we construct three-qubit unitary matrices of Schmidt rank from one to seven, respectively. We introduce the preliminary fact of deriving the Schmidt rank of tripartite matrices in Lemma \ref{le:3vector=span} and Corollary \ref{cr:prod}. The construction is presented in Theorem \ref{thm:3qubit<=7}. We also present a three-qubit unitary gate of Schmidt rank seven or eight in Theorem \ref{thm:sr7or8}. This is supported by Lemma \ref{le:sr(u8)>=6}. It turns out that the well-known three-qubit Toffoli and Fredkin gate respectively have Schmidt rank two and four. Then we implement three-qubit unitary gates of Schmidt rank one to seven using controlled-NOT (CNOT) gates and local unitary gates such as the Hadamard gates and qutrit flip gates. We illustrate the implementation in Figure \ref{fig:toffoli} to \ref{fig:u7}. In Theorem \ref{thm:two CNOT}, we show that three CNOT gates are necessary for the implementation of gates of Schmidt rank three, five, six and seven. Furthermore, we show in Theorem \ref{thm:three CNOT} that the collective use of three CNOT gates can generate a three-qubit unitary gate of Schmidt rank seven in terms of the Strassen tensor from multiplicative complexity \cite{Landsberg2011Tensors}.  

The implementation of quantum gates is usually carried out using CNOT gates assisted with local unitary gates. The efficiency is thus evaluated by the number of CNOT gates involved in the implementation. It has been proved that the theoretical lower bound for the number of CNOT gates needed in simulating an arbitrary $n$-qubit gate is $\lceil\frac{1}{4}(4^n-3n-1)\rceil$ \cite{shende2004minimal,vartiainen2004efficient}. So far there is little study on the connection between the Schmidt rank of a multiqubit gate and the number of required CNOT gates. Our results thus initiate the problem of understanding quantum circuit in terms of Schmidt rank.

The rest of this paper is organized as follows. In Sec. \ref{sec:res} we introduce the preliminary knowledge of this paper. Then we construct three-qubit unitary operations of Schmidt rank one to seven, respectively. We also construct a three-qubit unitary operation of Schmidt rank seven or eight. In Sec. \ref{sec:app} we implement three-qubit unitary gates using CNOT gates assisted by local unitary gates. We conclude in Sec. \ref{sec:con}.

\section{Construction of three-qubit untiary gates}
\label{sec:res}

We begin by introducing the notations used in this paper. We refer to $\bbC^d$ as the $d$-dimensional Hilbert space. We  denote $\bbM_{a\times b}$ as the set of $a\times b$ matrices. In particular if $a=b$ then we refer to $\bbM_a$ as the set of $a\times a$ matrices. Let $M^\dg$ be the transpose and complex conjugate of matrix $M$, i.e., $M^\dg =(M^T)^*$. Let $I_n$ be the $n\times n$ identity matrix. Further we shall refer to $I_2,\s_1,\s_2$ and $\s_3$ as the identity matrix and three Pauli matrices, respectively. Further, we denote $S_1,S_2,S_3,S_4$ as the $2\times2$ matrices 
\begin{eqnarray}
\label{eq:s0123}
&&
S_0=\bma1&0\\0&0 \ema, 
\quad
S_1=\bma0&1\\0&0 \ema,
\notag\\&&
S_2=\bma0&0\\1&0 \ema, 
\quad
S_3=\bma0&0\\0&1 \ema.
\end{eqnarray}

We define the Schmidt rank of an $n$-partite matrix $U$ on the $n$-partite Hilbert space $\cH_1\otimes...\otimes\cH_n:=\bbC^{d_1}\otimes...\otimes\bbC^{d_n}$ as the minimum integer $r$ such that 
$
U=\sum^r_{j=1}
A_{j,1}\otimes...\otimes A_{j,n-1}	\otimes A_{j,n}	
$
for some $d_i \times d_i$ matrix $A_{j,i}$ and $i=1,...,n$ \footnote{The notion is equivalent to the tensor rank in matrix multiplication. We denote it as Schmidt rank because a similar use has been proposed in \cite{Briegel2001The}. }. If $n=2$ then the definition reduces to the Schmidt rank of bipartite matrix $U$. For convenience we refer to $\sr(U)$ as the Schmidt rank of $U$. One can effectively derive the Schmidt rank of bipartite matrix by computing the  rank of the matrix modified from the bipartite matrix. Unfortunately computing the Schmidt rank of a tripartite matrix is an NP-hard problem \cite{H1990Tensor}. Nevertheless, we can construct the relation between bipartite and multipartite matrices, so as to investigate the relation between the Schmidt rank of them. For example, we can regard $U$ as a bipartite unitary matrix $U_{S:\bar{S}}$ of system $S=\{1,...,k\}$ and $\bar{S}=\{k+1,...,n\}$. By writing the Schmidt decomposition of $U_{S:\bar{S}}$, i.e., $U_{S:\bar{S}}=\sum^r_{i=1}B_i \otimes C_i$ with $r=\sr(U_{S:\bar{S}})$, we shall say that the span of $B_1,...,B_r$ is the $S$-space of $U$. Similarly, the span of $C_1,...,C_r$ is the $\bar{S}$ space of $U$. It's straightforwardly to show the inequality $\sr(U)\ge \sr(U_{S:\bar{S}})$. This is a frequently used lower bound of the Schmidt rank of $U$ because the Schmidt rank of bipartite matrices are known to be computable. 
We will use the inequality in the paper without explanation unless stated otherwise.  

To find a systematic way of deriving the Schmidt rank, we review a fact from Theorem 3.1.1.1 on p68 of \cite{LandsbergTensors}.  
\begin{lemma}
\label{le:3vector=span}
Suppose	$U=\sum^r_{j=1} Q_j \otimes R_j$ is a tripartite matrix where $Q_j$ on $\cH_A \otimes \cH_B$ are linearly independent, and $R_j$ on $\cH_C$ are also linearly independent. Then the Schmidt rank of $U$ is the minimal number of product matrices spanning the space including the space spanned by $Q_1,...,Q_r$.
\qed
\end{lemma}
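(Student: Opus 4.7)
The plan is to establish two inequalities that together pin down $\sr(U)$. Write $s$ for the quantity on the right-hand side: the minimum number of product matrices $A_1\ox B_1,\ldots,A_s\ox B_s\in\cH_A\ox\cH_B$ whose span contains $\lin(Q_1,\ldots,Q_r)$. The goal is to show $\sr(U)\le s$ and $\sr(U)\ge s$.

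The upper bound is pure bookkeeping. Expanding each $Q_j$ in the given product spanning set as $Q_j=\sum_{i=1}^{s} c_{ji}\,A_i\ox B_i$ and substituting into $U=\sum_{j=1}^r Q_j\ox R_j$, I would swap the summation order to obtain $U=\sum_{i=1}^{s} A_i\ox B_i\ox\bigl(\sum_{j=1}^r c_{ji}R_j\bigr)$, an explicit tripartite decomposition of length $s$; hence $\sr(U)\le s$.

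For the lower bound I would start from any minimal decomposition $U=\sum_{k=1}^{\sr(U)} A_k\ox B_k\ox C_k$ and regroup the first two tensor factors, viewing it as a bipartite expansion $U=\sum_k(A_k\ox B_k)\ox C_k$ across the $AB\!:\!C$ cut. The key intrinsic fact to invoke is that $\lin(Q_1,\ldots,Q_r)$ coincides with the minimal subspace $V\sue\cH_A\ox\cH_B$ for which $U\in V\ox\cH_C$; this identification is forced by the linear independence of the $R_j$ and can be verified directly by completing any basis of an alternative $V'$ with $U\in V'\ox\cH_C$ to a basis of $\cH_A\ox\cH_B$, expanding both expressions of $U$ in that basis, and then matching coefficients of the linearly independent $R_j$. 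Granting this, the inclusion $U\in\lin(A_k\ox B_k:k)\ox\cH_C$ yields $\lin(Q_1,\ldots,Q_r)\sue\lin(A_1\ox B_1,\ldots,A_{\sr(U)}\ox B_{\sr(U)})$, exhibiting $\sr(U)$ product matrices whose span contains $\lin(Q_j)$; minimality of $s$ then gives $s\le\sr(U)$.

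The one step that demands care is the identification of $\lin(Q_j)$ with the minimal $AB$-space of $U$. Without the linear independence of the $R_j$ the claim can fail, since one could artificially inflate $\lin(Q_j)$ by introducing dependencies; so the hypothesis is used essentially here. Everything else is routine manipulation, and once this intrinsic description of $\lin(Q_j)$ is secured the two inequalities match and the lemma follows.
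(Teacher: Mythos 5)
Your argument is correct and complete. Note that the paper itself offers no proof of this lemma: it is stated with a terminal \(\square\) and attributed to Theorem 3.1.1.1 of Landsberg's book, so there is no in-paper argument to compare against. What you supply is essentially the standard proof of that cited fact. The upper bound \(\sr(U)\le s\) by re-expanding the \(Q_j\) over a product spanning set and swapping sums is routine and airtight. The substantive step, as you correctly flag, is the identification of \(\lin(Q_1,\dots,Q_r)\) with the minimal subspace \(V\sue\cH_A\ox\cH_B\) satisfying \(U\in V\ox\cH_C\) (equivalently, the image of the flattening \(\cH_C^*\to\cH_A\ox\cH_B\)); your verification — complete a basis of a candidate \(V'\), expand \(U\) both ways, and use linear independence of the \(R_j\) to kill the coefficients outside \(V'\) — is exactly right, and you correctly observe that only the independence of the \(R_j\) is needed for this direction (the independence of the \(Q_j\) in the hypothesis is not actually used). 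From there, any minimal decomposition \(U=\sum_k A_k\ox B_k\ox C_k\) exhibits \(\sr(U)\) product matrices whose span contains the \(AB\)-support, giving \(s\le\sr(U)\). Both inequalities are established without gaps.
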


Then we present a corollary of this lemma.
\begin{corollary}
\label{cr:prod}
We still use the notations in Lemma 	\ref{le:3vector=span}. Let $U=\sum^{\sr(U)}_{i=1}X_i \otimes Y_i \otimes Z_i$. If $Q_1,...,Q_n$ are product matrices then we may assume that $Q_j=X_j\otimes Y_j$ for $j=1,...,n$.  
\end{corollary}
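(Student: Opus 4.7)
The plan is to realise the ``we may assume'' by an explicit construction, so there is nothing to do beyond producing a Schmidt-rank-$r$ tripartite decomposition of $U$ of the advertised form. First I would pin down $\sr(U)$ via Lemma~\ref{le:3vector=span}. Since the linearly independent matrices $Q_1,\ldots,Q_r$ are themselves product matrices, they form $r$ product matrices spanning the $r$-dimensional subspace $\lin\{Q_1,\ldots,Q_r\}$, and any spanning set of this subspace needs at least $r$ elements. Hence the minimum in Lemma~\ref{le:3vector=span} equals $r$, i.e.\ $\sr(U)=r$.

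Second I would exhibit the decomposition. The product-matrix hypothesis lets me write $Q_j=X_j\ox Y_j$ for each $j$, with $X_j$ a matrix on $\cH_A$ and $Y_j$ a matrix on $\cH_B$. Substituting into $U=\sum_{j=1}^{r} Q_j\ox R_j$ gives
$$
U=\sum_{j=1}^{r} X_j\ox Y_j\ox R_j,
$$
a tripartite expansion with exactly $r=\sr(U)$ terms. Setting $Z_j:=R_j$ yields a minimum-length tripartite decomposition satisfying $X_j\ox Y_j=Q_j$ for every $j$, which is the conclusion.

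Beyond unwinding definitions, the one point worth noting is that the length $r$ really is minimal: the bipartite Schmidt rank of $U$ on the cut $AB{:}C$ equals $r$ by the linear independence of the $Q_j$ and of the $R_j$, and this bipartite rank is a lower bound on the tripartite Schmidt rank (as already recorded in the preliminary discussion). So no decomposition of $U$ into fewer than $r$ triple product terms can exist, and I do not anticipate any genuine obstacle.
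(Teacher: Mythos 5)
There is a genuine gap: you have proved only the degenerate case in which \emph{every} $Q_j$ in the decomposition $U=\sum_{j=1}^{r}Q_j\otimes R_j$ is a product matrix, i.e.\ $n=r$. In that case $\sr(U)=r$ and the Schmidt decomposition across the $AB{:}C$ cut is itself a minimal tripartite decomposition, so the conclusion is immediate — which should be a warning sign that this is not what the corollary is about. The statement deliberately uses $n$ rather than $r$: only the first $n$ of the operators $Q_1,\dots,Q_r$ are assumed to be product, the remaining $Q_{n+1},\dots,Q_r$ need not be, and then $\sr(U)$ is in general strictly larger than $r$. This is exactly the situation in which the corollary is later applied (e.g.\ to $U_6$, where the $AB{:}C$ Schmidt rank is $4$, only two of the four Schmidt operators are product, and one wants to place those two among the five hypothetical terms $X_i\otimes Y_i$ of a rank-$5$ decomposition). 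Your construction produces no decomposition at all in that case, because you cannot write $U=\sum_j Q_j\otimes R_j$ as a sum of $r$ triple products when some $Q_j$ is not product.

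The missing idea is an exchange (Steinitz-type) argument, which is what the paper's proof does. Given any minimal decomposition $U=\sum_{i=1}^{\sr(U)}X_i\otimes Y_i\otimes Z_i$, the linear independence of the $R_j$ forces each $Q_j$ to lie in $\lin\{X_1\otimes Y_1,\dots,X_{\sr(U)}\otimes Y_{\sr(U)}\}$. Writing $Q_1=\sum_i c_i\,X_i\otimes Y_i$ with some $c_i\neq 0$ (say $c_1\neq 0$), one solves for $X_1\otimes Y_1$ in terms of $Q_1$ and the remaining $X_i\otimes Y_i$, substitutes back, and obtains a new decomposition of the same length $\sr(U)$ in which the product matrix $Q_1$ appears as an $AB$-factor. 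Iterating over $j=2,\dots,n$ (using the linear independence of $Q_1,\dots,Q_n$ to ensure each can be exchanged against a term not already replaced) gives the claim. Your observation that the $AB{:}C$ bipartite rank lower-bounds $\sr(U)$ is correct but does not substitute for this step.
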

\begin{proof}
We know that $Q_i$ is the linear combination of $X_j\otimes Y_j$ for $j=1,2,..,\sr(U)$. If $i=1$ and the coefficient of $X_1\otimes Y_1$ is nonzero, then we may express $X_1\otimes Y_1$ as the linear combination of $Q_1$ and $X_2\otimes Y_2,...,X_{\sr(U)}\otimes Y_{\sr(U)}$. Using the expression we obtain that $X_j\otimes Y_j$ is the linear combination of the same matrices. Hence we may assume that $Q_1=X_1\otimes Y_1$. One can similarly prove the assertion for $j=2,...,n$.
\end{proof}

The above corollary plays an important role in constructing three-qubit unitary matrices of Schmidt rank from one to seven, respectively. This is presented in Theorem \ref{thm:3qubit<=7}, namely the first main result of this section. Next we construct the three-qubit unitary operation of Schmidt rank seven or eight in Theorem \ref{thm:sr7or8}. This is the second main result of this section. We begin by studying three-qubit unitary matrices of Schmidt rank up to seven.
\begin{theorem}
\label{thm:3qubit<=7}
The three-qubit unitary operation of Schmidt rank up to seven exists.	
\end{theorem}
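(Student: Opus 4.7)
The plan is to exhibit, for each $r\in\{1,2,\ldots,7\}$, an explicit three-qubit unitary $U_r$ and to verify $\sr(U_r)=r$. Every case naturally splits into an upper bound, given by a witnessing decomposition into $r$ product summands, and a lower bound, ruling out any shorter decomposition.

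For the upper bound in each case I would work in the basis $\{S_0,S_1,S_2,S_3\}$ of $\bbM_2$ from~\eqref{eq:s0123} and try to write $U_r$ as an explicit sum of $r$ product tensors $A\ox B\ox C$. The easiest cases are $U_1=I_2\ox I_2\ox I_2$, $U_2=$ Toffoli via $I\ox I\ox I + S_3\ox S_3\ox(\s_1-I)$, and $U_4=$ Fredkin, which, after collecting the identity action outside the swap block with a single product summand and treating the swap block itself directly, can be written with $4$ tripartite terms. For $r=3,5,6$ I would construct $U_r$ by modifying Toffoli or Fredkin on a single block, so that the tensor rank changes by exactly one at each step. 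For $r=7$ I would follow the hint in the introduction and build $U_7$ from Strassen's $2\times 2$ matrix multiplication tensor, whose tensor rank equals $7$; the task is to embed this tensor into a unitary, which can be done by completing to a permutation-plus-phase structure and then verifying unitarity by direct calculation.

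The lower bounds are the nontrivial half. The first tool is the inequality $\sr(U)\ge\sr(U_{S:\bar S})$: for each $U_r$ I would compute the bipartite Schmidt rank across the three cuts $\{1\}{:}\{2,3\}$, $\{2\}{:}\{1,3\}$, $\{3\}{:}\{1,2\}$ by reshaping $U_r$ into the relevant matrix and taking its ordinary matrix rank. For $r\le 4$ at least one cut already witnesses rank $r$; for instance the $(12){:}(3)$ cut of Toffoli gives bipartite rank $2$ and settles $r=2$. Once $r\ge 5$ the bipartite Schmidt rank is capped at $4$, so I would instead invoke Lemma~\ref{le:3vector=span} together with Corollary~\ref{cr:prod}: assuming a shorter decomposition $U_r=\sum_{j=1}^{r-1}X_j\ox Y_j\ox Z_j$, one may take the $Z_j$ linearly independent by absorbing relations into the first two factors, and then the $\{1,2\}$-space of $U_r$ must be spanned by the $r-1$ product matrices $X_j\ox Y_j$, contradicting a preselected structural property of that subspace.

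The principal obstacle is this last step for $r=5,6,7$: showing that a specific subspace of $\bbM_2\ox\bbM_2$ cannot be spanned by too few rank-one tensors. For $r=7$ this reduces precisely to the classical lower bound on the tensor rank of the Strassen tensor, which we cite from \cite{Landsberg2011Tensors}. For $r=5$ and $r=6$ I would engineer the $\{3\}$-space of $U_r$ to contain designated non-product matrices whose presence forces at least one extra product tensor beyond the bipartite bound of $4$; the counting is then a finite linear-algebra verification inside the $16$-dimensional space $\bbM_2\ox\bbM_2$. A routine final step in each case is to check $U_r U_r^\dg=I_8$ from the matrix form.
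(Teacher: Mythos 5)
Your overall architecture coincides with the paper's: exhibit explicit gates, get upper bounds from $r$-term product decompositions, get lower bounds from bipartite cuts for small $r$ and from the span-of-product-matrices criterion of Lemma~\ref{le:3vector=span} and Corollary~\ref{cr:prod} for larger $r$, and realize $r=7$ via the Strassen tensor. The cases $r=1,2,4,7$ are in order: your two-term Toffoli decomposition $I^{\ox3}+S_3\ox S_3\ox(\s_1-I_2)$ and four-term Fredkin decomposition are correct, a bipartite cut certifies each lower bound, and the paper's $U_7$ is exactly the ``permutation completing Strassen'' you describe.

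The genuine gap is in $r=3,5,6$: you never produce the gates, and the guiding principle you propose --- modify Toffoli or Fredkin ``on a single block so that the tensor rank changes by exactly one'' --- is unsupported; a single-block alteration generically destroys unitarity, and nothing forces the rank to move by exactly one. Your lower-bound step for $r=5,6$ likewise hinges on a ``preselected structural property'' of the $AB$-space that cannot be preselected without the gate in hand. The paper supplies concrete witnesses: $U_3=\tfrac{1}{\sqrt3}(I_2^{\ox3}+i\s_1^{\ox3}+i\s_3^{\ox3})$, whose $(AB):C$ cut already has rank three (note your claim that some bipartite cut witnesses $r$ for every $r\le4$ holds only for a suitably chosen rank-three gate, not for an arbitrary one); and $U_5,U_6$ built from $\tfrac12 S_0\ox(I_2\ox I_2+\sum_i\s_i\ox\s_i)$ plus one or two extra terms, engineered precisely so that the $AB$-space contains a distinguished element such as $\tfrac12 S_0\ox\s_1+S_3\ox I_2$ that cannot lie in the span of four (resp.\ five) product matrices. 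A smaller point: ``absorbing relations into the first two factors'' to make the $Z_j$ linearly independent would destroy the product form of the $X_j\ox Y_j$; it is also unnecessary, since applying any linear functional to the third slot of $\sum_j X_j\ox Y_j\ox Z_j$ already places the $AB$-space inside $\lin\{X_j\ox Y_j\}$, which is all that Lemma~\ref{le:3vector=span} requires.
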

\begin{proof}
Let $U$ be a three-qubit unitary operation. 	If suffices to find $U$ with $\sr(U)=1,2,3,4,5$ and $6$, respectively. It is known that the two-qubit unitary $V$ of Schmidt rank one, two or four exists. So $U=I_2\otimes V$ has Schmidt rank one, two or four.  Next one can show that $U_3={1\over\sqrt3} (I_2^{\otimes3}+i\s_1^{\otimes3}+i\s_3^{\otimes3})$ is a three-qubit unitary matrix of Schmidt rank three. 

Third we construct $U=U_5$ of Schmidt rank five. Let
\begin{eqnarray}
\label{eq:3qubit<=5}
U_5=&&
{1\over2}S_0
\otimes(I_2\otimes I_2+\s_1\otimes\s_1+\s_2\otimes\s_2+ \s_3\otimes\s_3)	
\notag\\+&& S_3\otimes I_2\otimes \s_1
\notag\\=&&
{1\over2}S_0
\otimes I_2 \otimes I_2
+
{1\over2}S_0
\otimes \s_1\otimes\s_1+
S_3
\otimes I_2\otimes\s_1
\notag\\+&&
{1\over2}S_0
\otimes \s_2\otimes\s_2+
{1\over2}S_0
\otimes \s_3\otimes\s_3.
\end{eqnarray}
One can show that $U_5$ is unitary, and $4\le \sr(U_5)\le 5$. If $\sr(U_5)=4$ then $U_5=\sum^4_{j=1}A_j\otimes B_j\otimes C_j$ with some $2\times2$ matrices $A_j,B_j$ and $C_j$. By comparing with \eqref{eq:3qubit<=5}, one can show that $C_j$'s are linear independent, namely they span the space of the $2\times2$ matrices. So the $AB$ space of $U_5$ is spanned by $A_j\otimes B_j$'s, namely four linearly independent product matrices. Using \eqref{eq:3qubit<=5}, one can show that the $AB$ space of $U_5$ is spanned by the four linearly independent matrices 
\begin{eqnarray}
\label{eq:12e11}	
&&
{1\over2}S_0\otimes \s_1+S_3\otimes I_2, 
\quad
S_0\otimes I_2, 
\notag\\&&
S_0\otimes \s_2, 
\quad
S_0\otimes \s_3.
\end{eqnarray}
The assertion at the end of last paragraph says that, each of the four linearly independent product matrices $A_j\otimes B_j$'s is the linear combination of the four matrices in \eqref{eq:12e11}. So at least one of $A_j\otimes B_j$'s is the linear combination of them such that the coefficient of ${1\over2}S_0\otimes \s_1+S_3\otimes I_2$ is nonzero. However one can show that this linear combination is not a product matrix. 
We have proven that $\sr(U_5)\ne4$. Hence $\sr(U_5)=5$.

Fourth we construct $U=U_6$ of Schmidt rank six. Let
\begin{eqnarray}
\label{eq:ue11}
U_6=&&
{1\over2}S_0
\otimes(I_2\otimes I_2+\s_1\otimes\s_1+\s_2\otimes\s_2+ \s_3\otimes\s_3)	
\notag\\+&&
{1\over\sqrt2}S_3\otimes 
(I_2\otimes \s_1+\s_2\otimes
\s_3)
\notag\\=&&
{1\over2}S_0
\otimes I_2 \otimes I_2
+
{1\over2}S_0
\otimes \s_1\otimes\s_1
\notag\\+&&
{1\over\sqrt2}
S_3
\otimes I_2\otimes\s_1+
{1\over2}S_0
\otimes \s_2\otimes\s_2
\notag\\+&&
{1\over2}S_0
\otimes \s_3\otimes\s_3
+
{1\over\sqrt2}
S_3
\otimes \s_2\otimes\s_3.
\end{eqnarray} 

Suppose that $\sr(U_6)\le 5$. We may assume that $U_6=\sum^5_{i=1}Q_i\otimes C_i$ with the product matrices $Q_i\in \bbM_2\otimes\bbM_2$. Using Corollary 
\ref{cr:prod} and \eqref{eq:ue11} we may assume that 
\begin{eqnarray}
&& Q_1=S_0 \otimes I_2,
\\	
&& Q_2=S_0 \otimes \s_2,
\\&& 
\label{eq:ajqj}
S_0\otimes\s_1+\sqrt2 S_3\otimes I_2
=\sum^5_{j=1} a_jQ_j,
\\&& 
\label{eq:bjqj}
S_0\otimes\s_3+\sqrt2 S_3\otimes \s_2
=\sum^5_{j=1} b_jQ_j,
\end{eqnarray}
for some complex numbers $a_j$ and $b_j$. Let $Q_j=A_j\otimes B_j$ with $2\times2$ matrices $A_j$ and $B_j$ for $j=3,4,5$. Eqs. \eqref{eq:ajqj} and \eqref{eq:bjqj} imply that $\s_1,I_2,\s_3,\s_2\in\lin\{B_3,B_4,B_5\}$. It is a contradiction with the fact that $\lin\{B_3,B_4,B_5\}$ has dimension at most three. We have shown that $\sr(U_6)\ge6$. On the other hand \eqref{eq:ue11} shows that $\sr(U_6)\le6$. Hence $\sr(U_6)=6$. 

Fifth we construct $U=U_7$ of Schmidt rank seven. Let 
\begin{eqnarray}
U_7=&&
S_1\otimes S_2\otimes S_0+
S_2\otimes S_3\otimes S_0
\notag\\+&&
S_0\otimes S_0\otimes S_1+
S_3\otimes S_1\otimes S_1
\notag\\+&&
S_1\otimes S_1\otimes S_2+
S_2\otimes S_0\otimes S_2
\notag\\+&&
S_0\otimes S_3\otimes S_3+
S_3\otimes S_2\otimes S_3,
\end{eqnarray}

One can verify that $U_7$ is unitary. Further, we perform the permutation $(3210)$ on system $A$, $(320)$ on system $B$, and $(13)$ on system $C$ of $U_7$. Then $U_7$ is isomorphic to the known $4\times4\times4$ Strassen tensor, which has Schmidt rank seven. Hence $\sr(U_7)=7$. We have proven the assertion. 
\end{proof}

In contrast to the gate of Schmidt rank four constructed in the above proof, one can show that the three-qubit unitary operation in Eq. (18) of the paper \cite{Bullock2003Canonical}, written as $U=
{1\over\sqrt2}(S_0\otimes I_2\otimes I_2+
S_1\otimes \s_3\otimes \s_3+
S_2\otimes \s_1\otimes \s_1+
S_3\otimes \s_2\otimes \s_2)
$, has also Schmidt rank four. It is the so-called finagler related to the standard Cartan involution. Furthermore, one can show that the four-qubit unitary $U'$ in Eq. (16) of the paper \cite{Bullock2003Canonical} has rank at most 16. Actually we can express $U'$ as the sum of $16$ product matrices as follows.

\begin{widetext}
\begin{eqnarray}
\notag U'=&&
{1\over\sqrt2}(S_0\otimes S_0\otimes S_0
+S_0\otimes S_1\otimes S_2+
S_1\otimes S_2\otimes S_0+
S_1\otimes S_3\otimes S_2)\otimes \bma1&i\\0&0 \ema\\
\notag +&&
(S_0\otimes S_0\otimes S_1
+S_0\otimes S_1\otimes S_3+
S_1\otimes S_2\otimes S_1+
S_1\otimes S_3\otimes S_3)\otimes \bma0&0\\1&i \ema\\
\notag +&&
(S_2\otimes S_2\otimes S_2 
-S_2\otimes S_3\otimes S_0
-S_3\otimes S_0\otimes S_2+
S_3\otimes S_1\otimes S_0)\otimes \bma0&0\\1&-i \ema\\
+&&
(S_2\otimes S_2\otimes S_3 
-S_2\otimes S_3\otimes S_1-
S_3\otimes S_0\otimes S_3+
S_3\otimes S_1\otimes S_1)\otimes \bma-1&i\\0&0 \ema.
\end{eqnarray}
\end{widetext}

In the following, we construct a three-qubit unitary matrix $U_8$ using \eqref{eq:s0123}, and show it has Schmidt rank seven or eight in Theorem \ref{thm:sr7or8}. This is the second main result of this section. 

\begin{eqnarray}
\label{eq:u8}
U_8:=&&
S_0\otimes S_0\otimes S_0+
S_1\otimes S_3\otimes S_0
\notag\\+&&
S_2\otimes S_0\otimes S_1+
S_3\otimes S_2\otimes S_1
\notag\\+&&
S_0\otimes S_1\otimes S_2+
S_1\otimes S_2\otimes S_2
\notag\\+&&
S_2\otimes S_1\otimes S_3+
S_3\otimes S_3\otimes S_3.
\end{eqnarray}
We present the following observation as a lower bound of Schmidt rank of $U_8$.
\begin{lemma}
\label{le:sr(u8)>=6}	
The Schmidt rank of tensor $S_1\otimes S_3\otimes S_0+
S_2\otimes S_0\otimes S_1+
S_3\otimes S_2\otimes S_1+
S_1\otimes S_2\otimes S_2+
S_2\otimes S_1\otimes S_3+
S_3\otimes S_3\otimes S_3$ is six. Furthermore $\sr(U_8)\ge6$. 
\end{lemma}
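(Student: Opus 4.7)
The plan is to establish the two claims in turn. For $\sr(T) = 6$, the upper bound is immediate from the six-term expression of $T$, so the real content is the lower bound $\sr(T) \geq 6$, which I would prove by contradiction following the template used for $\sr(U_6) = 6$ in Theorem \ref{thm:3qubit<=7}. For $\sr(U_8) \geq 6$, the idea is to apply a suitable linear map on the first tensor factor that sends $U_8$ to $T$ and then to invoke the fact that a linear map on one tensor factor can only decrease or preserve the Schmidt rank.

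For the first claim, the key move is to cut $T$ as $AB:C$ and write the resulting four-term Schmidt decomposition
\begin{eqnarray*}
T &=& N_0 \otimes S_0 + N_1 \otimes S_1 + N_2 \otimes S_2 + N_3 \otimes S_3,
\end{eqnarray*}
where $N_0 = S_1 \otimes S_3$, $N_1 = S_2 \otimes S_0 + S_3 \otimes S_2$, $N_2 = S_1 \otimes S_2$, and $N_3 = S_2 \otimes S_1 + S_3 \otimes S_3$. Two useful observations are that (i) $N_0$ and $N_2$ are already product matrices in $\bbM_2 \otimes \bbM_2$, and (ii) the $A$-space of $T$ sits inside $\lin\{S_1, S_2, S_3\}$ because $S_0$ never appears on the $A$-factor of any summand of $T$. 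Assuming $\sr(T) \leq 5$, Lemma \ref{le:3vector=span} produces a decomposition $T = \sum_{i=1}^{5} Q_i \otimes C_i$ with each $Q_i$ product, and Corollary \ref{cr:prod} allows me to choose $Q_1 = S_1 \otimes S_3$ and $Q_2 = S_1 \otimes S_2$. Writing $Q_j = A_j \otimes B_j$ for $j = 3, 4, 5$, observation (ii) lets me take $A_j \in \lin\{S_1, S_2, S_3\}$ and expand $A_j = \alpha_j S_1 + \beta_j S_2 + \gamma_j S_3$.

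The contradiction then follows by expanding $N_1 = \sum_i a_i Q_i$ and $N_3 = \sum_i b_i Q_i$ and reading off the $S_2$- and $S_3$-components of the $A$-factor in each equation. Because $Q_1, Q_2$ contribute only an $S_1$-component on the $A$-factor, these two projections annihilate them and force each of $S_0, S_1, S_2, S_3$ to lie in $\lin\{B_3, B_4, B_5\}$. This is impossible since $\dim \lin\{B_3, B_4, B_5\} \leq 3 < 4 = \dim \bbM_2$, so $\sr(T) \geq 6$ and therefore $\sr(T) = 6$.

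For the second claim, I would take the linear map $\pi : \bbM_2 \to \bbM_2$ defined by $\pi(S_0) = 0$ and $\pi(S_k) = S_k$ for $k = 1, 2, 3$ and apply $\pi \otimes I_2 \otimes I_2$ to \eqref{eq:u8}. This map kills exactly the two summands of $U_8$ whose $A$-factor is $S_0$, namely $S_0 \otimes S_0 \otimes S_0$ and $S_0 \otimes S_1 \otimes S_2$, while preserving the remaining six summands, so $(\pi \otimes I_2 \otimes I_2)(U_8) = T$. Since applying a linear map on a single tensor factor cannot increase the Schmidt rank, $\sr(U_8) \geq \sr(T) = 6$. The main obstacle lies in the first part: one must identify the $AB:C$ cut as the useful flattening, spot that two of its $N_k$'s are already products so that Corollary \ref{cr:prod} can pin them down as $Q_1, Q_2$, and combine this with the $A$-space restriction so that the final dimension count on $\lin\{B_3, B_4, B_5\}$ yields the desired contradiction.
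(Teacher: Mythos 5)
Your argument is correct and follows essentially the same route as the paper: you project the first factor onto $\lin\{S_1,S_2,S_3\}$ to get $\sr(U_8)\ge\sr(T)$, then use the $AB{:}C$ flattening together with Corollary \ref{cr:prod} to fix the two product terms $S_1\otimes S_3$ and $S_1\otimes S_2$, and derive the contradiction by forcing $S_0,S_1,S_2,S_3$ into $\lin\{B_3,B_4,B_5\}$. The only difference is that you spell out explicitly the component-extraction step that the paper compresses into ``using the orthogonality of $S_0,S_1,S_2,S_3$.''
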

\begin{proof}
Let $T=S_1\otimes S_3\otimes S_0+
S_2\otimes S_0\otimes S_1+
S_3\otimes S_2\otimes S_1
+
S_1\otimes S_2\otimes S_2+
S_2\otimes S_1\otimes S_3+
S_3\otimes S_3\otimes S_3$. Because $U_8$ can be projected onto $T$ using a projector on the first system, we obtain that $\sr(T)\le \sr(U_8)$. So it suffices to prove $\sr(T)=6$ by contradiction. Suppose $\sr(T)\le5$, namely
$
T=\sum^5_{j=1}A_j\otimes B_j\otimes C_j.	
$ Using the orthogonality of $S_0,S_1,S_2,S_3$ we obtain that
$
S_2\otimes S_0+S_3\otimes S_2,
S_2\otimes S_1+S_3\otimes S_3, 
S_1\otimes S_3, S_1\otimes S_2 
\in\lin\{A_1\otimes B_1,...,A_5\otimes B_5 \}.	
$
By setting $A_1\otimes B_1=S_1\otimes S_3$ and $A_2\otimes B_2=S_1\otimes S_2$, we obtain that $S_0,S_1,S_2,S_3\in\lin\{B_3,B_4, B_5\}$. It is a contradiction, so we have shown that $\sr(T)=6$.	
\end{proof}

Now we are in a position to present the second main result of this section.
\begin{theorem}
\label{thm:sr7or8}
$\sr(U_8)=7$ or $8$. 	
\end{theorem}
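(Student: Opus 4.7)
The plan is to establish $\sr(U_8) \le 8$ from Eq.~\eqref{eq:u8} and $\sr(U_8) \ge 7$ by ruling out $\sr(U_8) = 6$ (since Lemma~\ref{le:sr(u8)>=6} already gives $\sr(U_8) \ge 6$). Assume for contradiction $U_8 = \sum_{j=1}^6 A_j \otimes B_j \otimes C_j$. Grouping $U_8$ by its third factor yields $U_8 = \sum_{k=0}^3 N_k \otimes S_k$ with four linearly independent bipartite slices $N_0, N_1, N_2, N_3 \in \bbM_2 \otimes \bbM_2$, each of bipartite Schmidt rank $2$. By Lemma~\ref{le:3vector=span}, the six products $A_j \otimes B_j$ span a space containing these four slices.

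The key structural observation is the row-space decomposition $\bbM_2 = V_1 \oplus V_2$ with $V_1 = \lin(S_0, S_1)$ and $V_2 = \lin(S_2, S_3)$: one checks that $N_0, N_2 \in V_1 \otimes \bbM_2$ and $N_1, N_3 \in V_2 \otimes \bbM_2$. Projecting $U_8$ onto each summand gives sub-tensors $U_8^{(1)} = N_0 \otimes S_0 + N_2 \otimes S_2$ and $U_8^{(2)} = N_1 \otimes S_1 + N_3 \otimes S_3$, each living in $V_i \otimes \bbM_2 \otimes \bbM_2$. I would first show that $\sr(U_8^{(i)}) = 4$ and that any rank-$4$ decomposition of $U_8^{(i)}$ has a rigid structure: its first-factor matrices must lie in exactly two $1$-dimensional directions of $V_i$ (two terms in each), and the corresponding $B$-factors must form a specific basis of $\bbM_2$ determined by those directions. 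The lower bound $\sr(U_8^{(i)}) \ge 4$ follows because the row-spans of $N_0$ and $N_2$ (resp.~$N_1$ and $N_3$) in $\bbM_2$ are complementary $2$-dim subspaces, so the $B$-factors of any rank-$1$ decomposition must jointly span all of $\bbM_2$.

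Let $n_1$, $n_2$, and $n_{12}$ count the six products $A_j \otimes B_j$ whose first factor lies purely in $V_1$, purely in $V_2$, or has nonzero components in both. Projecting the $6$-term decomposition of $U_8$ onto $V_i \otimes \bbM_2 \otimes \bbM_2$ yields a rank-$(n_i + n_{12})$ decomposition of $U_8^{(i)}$, so $n_1 + n_{12} \ge 4$ and $n_2 + n_{12} \ge 4$. Combined with $n_1 + n_2 + n_{12} = 6$, this forces $n_{12} \ge 2$.

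The main obstacle is deriving a contradiction from $n_{12} \ge 2$. Each type-$12$ product shares its $B_j$ between the two rigid bases of $\bbM_2$ required by the rank-$4$ decompositions of $U_8^{(1)}$ and $U_8^{(2)}$. A short case analysis of the pairwise intersections of the $2$-dim subspaces defining these bases (parameterized by the chosen $V_i$-directions) shows that the existence of two linearly independent shared $B_j$'s forces a degenerate configuration in which the two $V_1$-directions are $\lin(S_0)$ and $\lin(S_1)$ and the two $V_2$-directions are $\lin(S_2)$ and $\lin(S_3)$ (or the analogous symmetric variant). Matching coefficients slice by slice with the explicit form of $U_8$, in the spirit of the $U_6$ argument of Theorem~\ref{thm:3qubit<=7}, then forces the corresponding $C_j$'s to lie simultaneously in $\lin(S_0, S_2)$ and $\lin(S_1, S_3)$, hence in $\{0\}$, a contradiction. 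Together with the upper bound from Eq.~\eqref{eq:u8}, this yields $\sr(U_8) \in \{7, 8\}$.
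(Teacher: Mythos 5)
Your setup is sound and genuinely different from the paper's: the slicing $U_8=\sum_k N_k\otimes S_k$, the splitting $\bbM_2=V_1\oplus V_2$ on system $A$, the computation $\sr(U_8^{(i)})=4$ via the full $B$-marginal, and the counting $n_1+n_{12}\ge4$, $n_2+n_{12}\ge4$, hence $n_{12}\ge2$, are all correct. The gap is in the rigidity claim that is supposed to carry the contradiction. After a change of basis on system $B$, $U_8^{(1)}$ is exactly the tensor $\sum_{i,j\in\{0,1\}}e_i\otimes h_{ij}\otimes f_j$ with $\{h_{ij}\}$ linearly independent (here $e_0=S_0,e_1=S_1$ span $V_1$ and $f_0=S_0,f_1=S_2$ span the $C$-support). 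Reordering factors, this is a full-rank element of $(V_1\otimes\lin\{S_0,S_2\})\otimes B$, so its rank-$4$ decompositions are in bijection with the bases of the $4$-dimensional space $V_1\otimes\lin\{S_0,S_2\}$ consisting of product vectors $a_k\otimes c_k$. For example, the product basis $\{e_0\otimes f_0,\,e_0\otimes f_1,\,e_1\otimes f_0,\,(e_0+e_1)\otimes(f_0+f_1)\}$ yields a legitimate rank-$4$ decomposition of $U_8^{(1)}$ whose first factors point in the three distinct directions $S_0$, $S_1$, $S_0+S_1$. So it is false that the first factors must lie in exactly two directions with the $B$-factors forming a basis ``determined by those directions,'' and the case analysis you build on pairwise intersections of those rigid bases does not get off the ground.

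There is a second, independent gap: your contradiction mechanism only engages when both projections are exactly $4$-term, i.e. $n_{12}=2$ and $n_1=n_2=2$. For $n_{12}\ge3$ at least one projection is a non-minimal decomposition of a rank-$4$ tensor with five or six terms, and then nothing forces its $B$-factors to be independent or its $C$-factors to lie in the $C$-support, so no structural conclusion is available on that side. The endgame you aim for can in fact be reached without any rigidity in the minimal case: when a projection has exactly $4$ terms its $B$-factors must be a basis of $\bbM_2$, hence the $A\otimes B$ parts are linearly independent and every surviving $C_j$ lies in the $C$-support ($\lin\{S_0,S_2\}$ for $U_8^{(1)}$, $\lin\{S_1,S_3\}$ for $U_8^{(2)}$), so a shared term has $C_j=0$. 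To turn your outline into a proof you would still need to dispose of $n_{12}\ge3$ separately (for instance, by noting that the type-$12$ terms, having $C_j\in\lin\{S_0,S_2\}$ from the minimal projection, are killed by projecting $C$ onto $\lin\{S_1,S_3\}$, leaving at most $n_2\le 2$ terms for the rank-$4$ tensor $U_8^{(2)}$). The paper takes a different route altogether: it normalizes one term of the putative $6$-term decomposition to $F_7\otimes(S_0\otimes S_0+S_1\otimes S_2)$, applies invertible local transformations, and runs a case analysis on the number of remaining first factors not orthogonal to $S_1$.
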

\begin{proof}
Suppose $\sr(U_8)=6$, we have 
\begin{eqnarray}
\label{eq:u8f6}
U_8:=&&
F_1\otimes G_1\otimes H_1+F_2\otimes G_2\otimes H_2+F_3\otimes G_3\otimes H_3
\notag\\+&&
F_4\otimes G_4\otimes H_4+F_5\otimes G_5\otimes H_5+F_6\otimes G_6\otimes H_6,
\notag\\
\end{eqnarray}
where $F_i,G_i,H_i$ are $2\times2$ matrices. $S_0$ is orthogonal to $S_1,S_2 ,S_3$ implies that $S_0\otimes S_0+S_1\otimes S_2$ is in the span of $G_1\otimes H_1$,...$G_6\otimes H_6$. So we assume that 
\begin{eqnarray}
\label{eq:u8f12}
U_8:=&&
F_7\otimes (S_0\otimes S_0+S_1\otimes S_2)+F_8\otimes G_2\otimes H_2
\notag\\+&&
F_9\otimes G_3\otimes H_3+F_{10}\otimes G_4\otimes H_4
\notag\\+&&
F_{11}\otimes G_5\otimes H_5+F_{12}\otimes G_6\otimes H_6.
\end{eqnarray}
Futher, we get $S_0-F_7$ is in the span of $S_1,S_2,S_3$. Let $S_0-F_7=xS_1+yS_2+zS_3$ where $x,y,z$ are complex numbers and at least one of them are nonzero. We have
\begin{eqnarray}
\label{eq:u8f12}
U_8:=&&
S_1\otimes (S_3\otimes S_0+S_2\otimes S_2+x(S_0\otimes S_0+S_1\otimes S_2))
\notag\\=&&S_2\otimes (S_0\otimes S_1+S_1\otimes S_3+y(S_0\otimes S_0+S_1\otimes S_2))
\notag\\=&&S_3\otimes (S_2\otimes S_1+S_3\otimes S_3+z(S_0\otimes S_0+S_1\otimes S_2))
\notag\\=&&
F_8\otimes G_2\otimes H_2+F_9\otimes G_3\otimes H_3+F_{10}\otimes G_4\otimes H_4
\notag\\+&&
F_{11}\otimes G_5\otimes H_5+F_{12}\otimes G_6\otimes H_6.
\end{eqnarray}
Note that Schmidt rank is invariant under invertible local transformation, we do the transformation $S_3\rightarrow S_3-xS_0$, $S_2\rightarrow S_2-xS_1$ on system $B$ and $S_1\rightarrow S_1-yS_0$, $S_3\rightarrow S_3-yS_2 $ on system $C$, and obtain 
\begin{eqnarray}
\label{eq:u8f121}
&&S_1\otimes (S_3\otimes S_0+S_2\otimes S_2)+S_2\otimes (S_0\otimes S_1+S_1\otimes S_3)
\notag\\+&&
S_3\otimes ((S_2-xS_1)\otimes (S_1-yS_0)+z(S_0\otimes S_0+S_1\otimes S_2)
\notag\\+&&
(S_3-xS_0)\otimes (S_3-yS_2))
\notag\\=&&
F_{8}'\otimes G_{2}'\otimes H_{2}'+F_{9}'\otimes G_{3}'\otimes H_{3}'+F_{10}'\otimes G_{4}'\otimes H_{4}'
\notag\\+&&
F_{11}'\otimes G_{5}'\otimes H_{5}'+F_{12}'\otimes G_{6}'\otimes H_{6}'.
\end{eqnarray}
We next define $n_{s_1}$ as the number of matrices in the set $\left\lbrace F_{8}',F_{9}',F_{10}',F_{11}',F_{12}'\right\rbrace$ that are not orthogonal to $S_1$. Eq.(\ref{eq:u8f121}) implies that $2\le n_{s_1}\le 5$. 

We shall investigate $n_{s_1}$ in four cases. First, suppose $n_{s_1}=2$. Up to the switch of $F_i'\otimes G_i'\otimes H_i'$, we can assume that $F_{8}'$ and $F_{9}'$ are not orthogoanl to $S_1$ and hence $S_3\otimes S_0+S_2\otimes S_2=p G_{2}'\otimes H_{2}'+q G_{3}'\otimes H_{3}'$ for nonzero complex numbers $p,q$. Hence $H_{2}',H_{3}'\in\lin\{S_0,S_2\}$. Futher, by regarding $S_0,S_1,S_2,S_3$ as 4-dim vectors and performing the local projection $I_A\otimes I_B \otimes (\ketbra{S_1}{S_1}+\ketbra{S_3}{S_3})$ on Eq.(\ref{eq:u8f121}), we obtain that 
\begin{eqnarray}
&&S_2\otimes (S_0\otimes S_1+S_1\otimes S_3)+S_3\otimes ((S_2-xS_1)\otimes S_1
\notag \\+&&
(S_3-xS_0)\otimes S_3)
\notag \\=&&
F_{10}'\otimes G_{4}'\otimes H_{4}'+F_{11}'\otimes G_{5}'\otimes H_{5}'+F_{12}'\otimes G_{6}'\otimes H_{6}'.
\notag\\
\end{eqnarray}
However, the equtaion does not hold because the left has Schmidt rank four while the right has three entries at most. So $n_{s_1}\ne 2$.

The other three cases are $n_{s_1}=3,4$ or $5$. And for all, we can use the similar way to prove that they are impossible. Hence $\sr(U_8)=6$ is impossible.

Using Lemma \ref{le:sr(u8)>=6}, we finish the proof.
\end{proof}

Unfortunately we cannot determine $\sr(U)=7$ or $8$, and we leave it as an open problem. In the next section, we shall show how to construct some three-qubit unitary operations of Schmidt rank from one to seven.

\section{Implementation of three-qubit unitary gates}
\label{sec:app}

We have shown in Theorem \ref{thm:3qubit<=7} the existence of three-qubit unitary gates of Schmidt rank one to seven. It is a natural question to ask how many CNOT gates $T:=\proj{0}\otimes I_2+\proj{1}\otimes \s_1$ are sufficient to implement them. In this section we investigate the question. To save CNOT gates, we will construct three-qubit gates of various Schmidt rank different from those in Theorem \ref{thm:3qubit<=7}. In particular, we show that the three-qubit Toffoli and Fredkin gate respectively have Schmidt rank two and four. We show in Theorem \ref{thm:two CNOT} that
the combination of two CNOT gates and local unitary gates generate a three-qubit unitary gate of Schmidt rank one, two or four. So implementing gates of Schmidt rank three and larger than four requires at least three CNOT gates. In particular, we show in Theorem \ref{thm:three CNOT} that the combination of three CNOT gates can generate a three-qubit unitary gate of Schmidt rank seven, by using the isomorphism to the Strassen tensor from multiplicative complexity. 

First, every Schmidt-rank-one unitary gate is a local unitary gate, and it does not require CNOT gate. Next, the three-qubit gate $U_2=I_A\otimes T_{BC}$ has Schmidt rank two, and can be implemented using one CNOT gate. As it is trivial, we construct a nontrivial example. We point out that the known three-qubit Toffoli gate $T_3$ (i.e., the controlled CNOT gate) also has Schmidt rank two, because 
\begin{eqnarray}
T_3=&&
(I_2\otimes I_2\otimes H)
(I_2\otimes I_2\otimes I_2
\notag\\-&&
2\proj{1}\otimes \proj{1}\otimes \proj{1}) 	
(I_2\otimes I_2\otimes H),
\end{eqnarray}
where  $H=\bma {\sqrt2\over 2}&{\sqrt2\over 2}\\{\sqrt2\over 2}&-{\sqrt2\over 2} \ema$ stands for the qubit Hadamard gate. It has been proven that the Toffoli gate can be implemented using three CNOT gates assisted with local gates  \cite{PhysRevA.75.022313, Lanyon2008Simplifying}, see Figure \ref{fig:toffoli}.

\begin{figure}[ht]
\centering
\includegraphics[width=8cm]{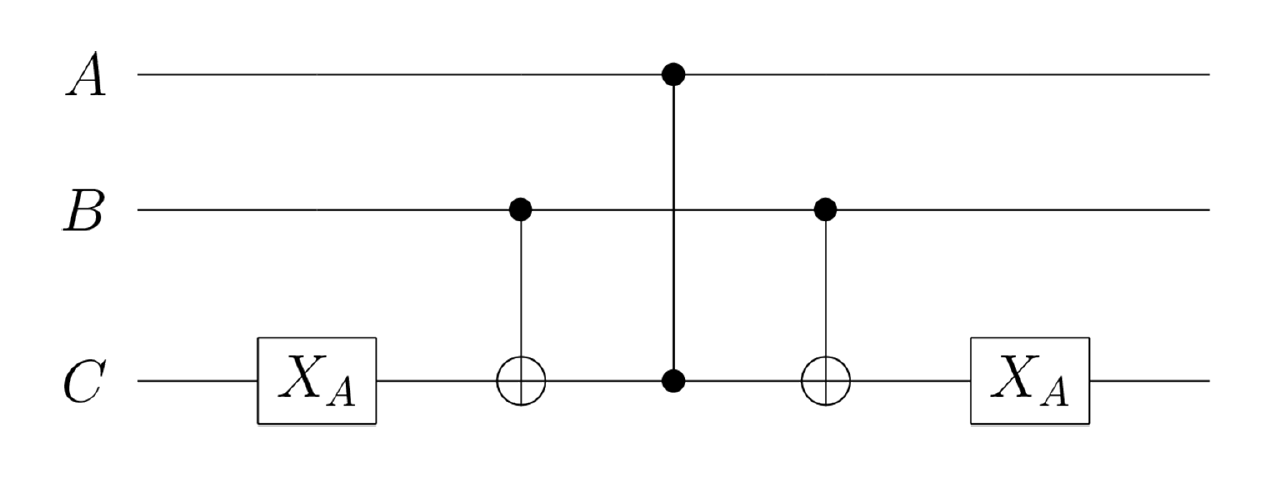}
\caption{The three-qubit Toffoli gate $T_3$ of Schmidt rank two can be realized using two CNOT gates and one CZ gate $\diag(1,1,1,-1)$ in the middle. The CZ gate is locally equivalent to the CNOT gate via two Hadamard gates $H$. The local gate $X_A$ flips the qutrits $\ket{0}$ and $\ket{2}$.}
\label{fig:toffoli}
\end{figure}

Third using the Toffoli gate and one more CNOT gate, we can construct a three-qubit gate $U_3$ of Schmidt rank three as follows.
\begin{eqnarray}
U_3=&&
(T_{AB}\otimes H)T_3
(I_2\otimes I_2\otimes H)
\notag\\=&&	
(\proj{0}\otimes I_2+\proj{1}\otimes \s_1)
\otimes I_2
\notag\\-&& 
2\proj{1}\otimes \ketbra{0}{1}\otimes \proj{1}. 
\end{eqnarray}
So we can realize $U_3$ using four CNOT gates assisted with local unitary gates in Figure \ref{fig:sr3}.

\begin{figure}[ht]
\centering
\includegraphics[width=8cm]{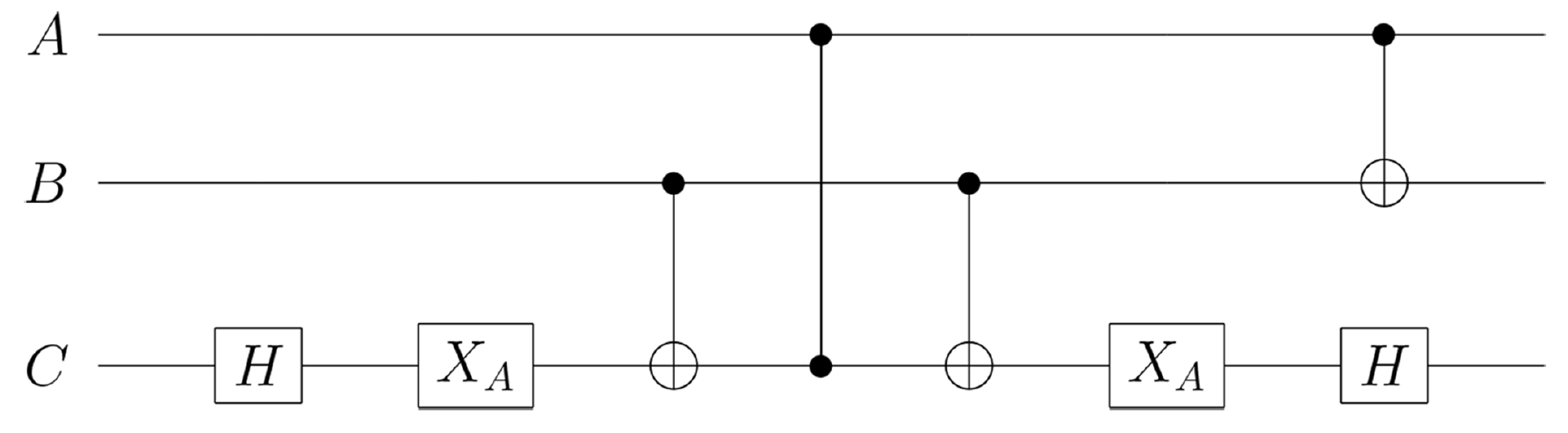}
\caption{The three-qubit gate $U_3$ of Schmidt rank three can be implemented using four CNOT gates, local Hadamard gates $H$ and local gate $X_A$ flipping the qutrit $\ket{0}$ and $\ket{2}$.}
\label{fig:sr3}
\end{figure}

We don't know whether four CNOT gates are also necessary for constructing a thre-qubit unitary gate of Schmidt rank three. Nevertheless, It turns out that three CNOT gates are necessary. This is a corollary of the following observation. 
\begin{theorem}
\label {thm:two CNOT}
The combination of two CNOT gates and local unitary gates generates a three-qubit unitary gate of Schmidt rank one, two or four. 
\end{theorem}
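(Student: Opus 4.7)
The plan is to use LU-invariance of the Schmidt rank to bring the circuit to a canonical form, split on whether the two CNOTs share a qubit, and in the nontrivial case apply Lemma~\ref{le:3vector=span} together with a $B$-space intersection argument. Since $\sr$ is invariant under multiplication by local unitaries on either side, we may absorb the outer local blocks and write $U = C_1(L_A \otimes L_B \otimes L_C) C_2$ with $C_1, C_2$ CNOTs. Three qubits admit only three pairs, and any two of them share a common qubit, so either (a) $C_1$ and $C_2$ act on the same pair, or (b) they share exactly one qubit. In case (a) the third qubit is inert, $U$ reduces to a two-qubit gate tensored with a one-qubit local, and Nielsen's theorem cited in the introduction gives $\sr(U)\in\{1,2,4\}$.

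For case (b), Hadamard conjugation is local and flips CNOT control/target, so WLOG $C_1 = T_{AB}$ and $C_2 = T_{BC}$. Direct expansion gives
\[
U = \proj{0}L_A \otimes V_0 + \proj{1}L_A \otimes V_1, \qquad V_i = (\s_1^i L_B \otimes L_C)\,T_{BC},
\]
from which $\sr_{A:BC}(U) = 2$ and $\sr(U)\le 4$. The task is to exclude $\sr(U) = 3$. Suppose $U = \sum_{k=1}^3 A_k\otimes B_k\otimes C_k$; since $\sr_{A:BC}(U) = 2$, the $A_k$'s are linearly dependent and WLOG $A_3 = \lambda A_1 + \mu A_2$. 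Regrouping gives $U = A_1 \otimes S_1 + A_2 \otimes S_2$ with $S_1 = B_1\otimes C_1 + \lambda B_3 \otimes C_3$ and $S_2 = B_2\otimes C_2 + \mu B_3\otimes C_3$. Matching with the 2-term $A:BC$ expansion shows $\lin\{S_1,S_2\} = \lin\{V_0,V_1\}$, so each $S_i$ is a linear combination of $V_0, V_1$ admitting a rank-$2$ decomposition containing $B_3 \otimes C_3$. Consequently $B_3$ lies in both $B$-spaces, i.e.\ $B_3 \in M_1 L_B D \cap M_2 L_B D$, where $M_1, M_2 \in \lin\{I,X\}$ arise from the change of basis and $D \subset \bbM_2$ is the subspace of diagonal matrices.

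A direct computation with $R := L_B^{-1} X L_B$ shows this intersection is trivial for every invertible pair $(M_1, M_2)$ unless $R$ itself is diagonal. Here one uses that $L_B$ is unitary, so $R$ is Hermitian with $r_{01} = \overline{r_{10}}$, forcing either both off-diagonal entries to vanish or neither. In the generic sub-case ($r_{01}, r_{10} \ne 0$) the intersection is $\{0\}$, so $B_3 = 0$ contradicts $\sr(U) = 3$; hence $\sr(U) = 4$. In the degenerate sub-case ($R$ diagonal, i.e.\ $L_B$ maps the computational basis to the $X$-eigenbasis), a short calculation yields $V_0 \pm V_1$ each as a single product matrix, so $\lin\{V_0, V_1\}$ is spanned by two product matrices and Lemma~\ref{le:3vector=span} gives $\sr(U) = 2$. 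Combining with case (a), $\sr(U)\in\{1,2,4\}$.

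The main obstacle I anticipate is the intersection calculation; the Hermiticity of $R$ eliminates the ``half-degenerate'' scenario of a $1$-dimensional intersection, which would otherwise threaten a genuine $\sr = 3$ decomposition, and instead produces the clean dichotomy between generic $\sr(U) = 4$ and degenerate $\sr(U) = 2$ with no intermediate value.
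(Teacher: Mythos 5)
Your proposal is correct and follows the same overall strategy as the paper: absorb the outer locals, split on whether the two CNOTs act on the same pair (reducing to the two-qubit theorem) or share one qubit, and in the latter case reduce everything to a dichotomy on the middle local unitary on the shared qubit. Your degeneracy condition ($R=L_B^{-1}\sigma_1 L_B$ diagonal, i.e.\ $L_B$ maps the computational basis to the $\sigma_1$-eigenbasis) is exactly the paper's condition $m^2=l^2$ on $W=\begin{pmatrix}m&n\\l&p\end{pmatrix}$, and both proofs use unitarity to show the two columns degenerate together. Where you genuinely differ is the crucial exclusion of Schmidt rank three: the paper works in the $AB{:}C$ cut and simply asserts that when $WS_0,WS_3,\sigma_1WS_0,\sigma_1WS_3$ are linearly independent ``any three product matrices could not span the $AB$ space,'' whereas you work in the $A{:}BC$ cut, take a hypothetical three-term decomposition, and kill it by showing $B_3$ must lie in the trivial intersection of two translated diagonal subspaces. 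Your route actually supplies the argument the paper leaves implicit, and your observation that Hermiticity of $R$ forbids the half-degenerate (one-dimensional intersection) scenario is the right key point. One spot to tighten: the inference ``each $S_i$ admits a rank-2 decomposition containing $B_3\otimes C_3$, hence $B_3$ lies in both $B$-spaces'' silently assumes $\lambda,\mu\neq 0$ and that neither $S_i$ is a product operator. This can fail only when $\lin\{V_0,V_1\}$ contains a nonzero product, which for unitary $L_B$ happens precisely in your degenerate sub-case; so you should dispose of that sub-case first and then note that in the generic case every nonzero element of $\lin\{V_0,V_1\}$ has $B{:}C$ Schmidt rank exactly two, which legitimizes the intersection step. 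With that reordering the proof is complete.
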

\begin{proof}
Up to the switch of systems, the combination of two CNOT gates and local unitary gates has the expression $M_1=U_1((T_{AB}\otimes I_C)U(I_A\otimes T_{BC}))U_2$ or $M_2=U_1((T_{AB}\otimes I_C)U(T_{AB}\otimes I_C))U_2$, with local three-qubit unitary gates $U_1,U$ and $U_2$. One can verify that the second gate $M_2$ is indeed a two-qubit unitary gate, so it does not have Schmidt rank three. By choosing $U=I_8$, the gate $M_2$ becomes a local unitary gate. 

Next we consider $M_1$, suppose $U=V \otimes W\otimes X$, and it also has the expression $M_1=U_1(I_2\otimes I_2\otimes X)(T_{AB}\otimes I_C)(I_2\otimes W\otimes I_2)(I_A\otimes T_{BC})(V\otimes I_2\otimes I_2)U_2$. Because local unitary transformation does not change the Schmidt rank, we may assume that $U_1=U_2=I_8$ and $V=X=I_2$. We have

\begin{eqnarray}
\notag M_1=&&(S_0\otimes I_2\otimes I_2+S_3\otimes \sigma_1\otimes I_2)
\notag\\&&
(I_2 \otimes W\otimes I_2)(I_2\otimes S_0\otimes I_2+I_2\otimes S_3\otimes \sigma_1)
\notag\\
\notag =&&S_0\otimes WS_0\otimes I_2+S_0\otimes WS_3\otimes \sigma_1\\
  +&&S_3\otimes \s_1 WS_0\otimes I_2+S_3\otimes \s_1 WS_3\otimes \sigma_1,
\end{eqnarray}	
where $S_0,...,S_3$ are the $2\times2$ matrices defined in \eqref{eq:s0123}.
So $S_1$ also has Schmidt rank at most four. It is clear that $S_0$ and $S_3$ are linearly independent in system $A$, $I_2$ and $\sigma_1$ are linearly independent in system $C$. We next consider the four matrices $WS_0,WS_3, \s_1WS_0$ and $\s_1WS_3$ in system $B$.

Assume that $k_1WS_0+k_2\s_1WS_0+k_3WS_3+k_4\s_1WS_3=0$ for complex numbers $k_1$ to $k_4$ and set $W=\bma m&n\\l&p \ema$.

We obtain that
\begin{eqnarray}
\bma mk_1+lk_2&nk_3+pk_4\\lk_1+mk_2&pk_3+nk_4 \ema=\bma 0&0\\0&0 \ema,
\end{eqnarray} 
and
\begin{eqnarray}
\label{m l} mk_1+lk_2=0,\\
\label{l m} lk_1+mk_2=0,\\
nk_3+pk_4=0,\\
pk_3+nk_4=0.
\end{eqnarray}

Suppose $k_1=0$, if $k_2=0$ then from Eqs. \eqref{m l} and \eqref{l m} we have $m=n=0$, it means that $U$ cannot be a unitary matrix, so this is impossible. Hence $k_1=0$ implies $k_2=0$. Also $k_2=0$ implies $k_1=0$ and the same relations to $k_3$ and $k_4$.

We next suppose $k_1 \ne 0$ and hence $k_2 \ne 0$, we obtain 
\begin{eqnarray}
{k_1 \over k_2}={m\over l}={l\over m}. 
\end{eqnarray}

So $m^2=l^2$ and hence $n^2=p^2$. We can get the same result if we assume $k_3\ne 0$ and $k_4\ne 0$.  

In both cases we have $WS_0=\bma m&0\\l&0 \ema$ and $\s_1WS_0=\bma l&0\\m&0 \ema$ are linearly dependent, $WS_3=\bma 0&n\\0&p \ema$ and $\s_1WS_3=\bma 0&p\\0&n \ema$ are linearly dependent, and obtain
$
 M_1=(S_0+{m\over l}S_3)\otimes WS_0\otimes I_2+(S_0+{p\over n}S_3)\otimes WS_3\otimes \s_1.
$
So in this situation $M_1$ has Schmidt rank two. 

The remaining case is that $k_1=k_2=k_3=k_4=0$. So the four matrices in system $B$ are linearly independent. Further, any three product matrices could not span the AB space of $M_1$. So $M_1$ has Schmidt rank four.
We finish the proof.
\end{proof}

Fourth, we construct the three-qubit unitary gate $U_4=(T_{AB}\otimes I_C)(I_A\otimes T_{BC})$. It is straightforward to prove that $U_4$ has Schmidt rank four. We describe it in Figure \ref{fig:sr4}. Note that two CNOT gates are the minimum cost of realizing every gate of Schmidt rank four. In contrast, we point out that the known three-qubit Fredkin gate $F_3$ (i.e., the controlled swap gate) also has Schmidt rank four, because 
\begin{eqnarray}
F_3=&&
(\proj{0}+{1\over2}\proj{1})
\otimes I_2\otimes 
I_2
+{1\over2}\proj{1}
\otimes\s_3\otimes\s_3
\notag\\+&&
\proj{1}
\otimes
\ketbra{0}{1}
\otimes
\ketbra{1}{0}
+
\proj{1}
\otimes
\ketbra{1}{0}
\otimes
\ketbra{0}{1}.
\notag\\
\end{eqnarray}
It's been proven that the Fredkin gate can be implemented using five CNOT gates assisted with local gates \cite{2004.03134}, see Figure \ref{fig:fredkin}. We don't know whether the Fredkin gate can be implemented using fewer CNOT gates.

\begin{figure}[ht]
\centering
\includegraphics[width=5cm]{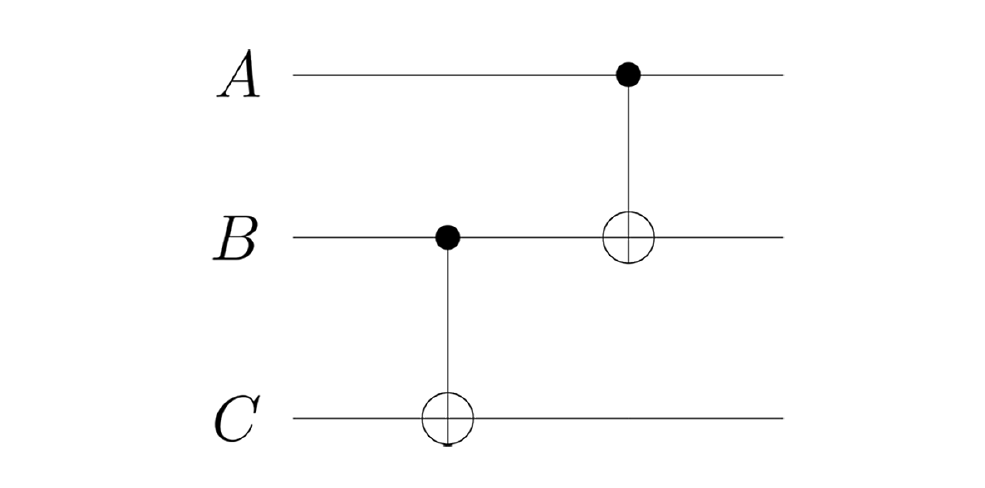}
\caption{The three-qubit gate $U_4$ of Schmidt rank four consists of two CNOT gates. This is minimum cost of realizing any three-qubit unitary gate of Schmidt rank four.}
\label{fig:sr4}
\end{figure}

\begin{figure}[ht]
\centering
\includegraphics[width=8cm]{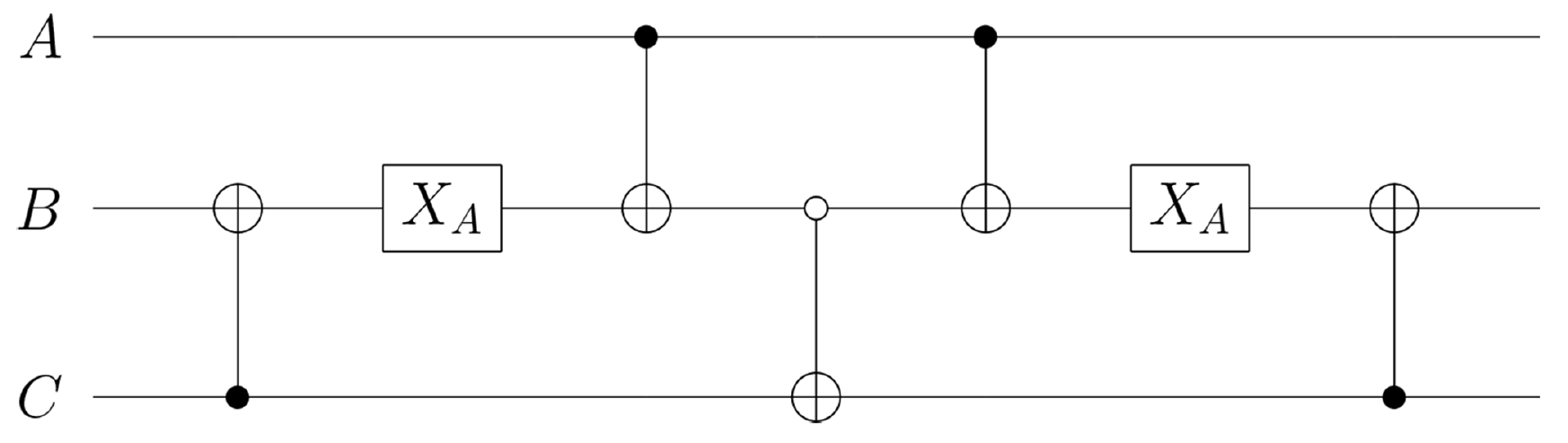}
\caption{The three-qubit Fredkin gate $F_3$ of Schmidt rank four can be implemented using five CNOT gates and local gates $X_A$ flipping the qutrit $\ket{0}$ and $\ket{2}$.}
\label{fig:fredkin}
\end{figure}


Fifth, using the Fredkin gate  and one more CNOT gate, we can construct a three-qubit gate $U_5$ of Schmidt rank five as follows.
\begin{eqnarray}
\label{eq:sr5}
U_5=&&
(T_{AB}\otimes I_2)F_3
=\proj{0}
\otimes I_2\otimes 
I_2
\notag\\	
+&&
{1\over2}\proj{1}\otimes \s_1\otimes 
I_2
+
{1\over2}\proj{1}
\otimes \s_1\s_3\otimes\s_3
\notag\\+&&
\proj{1}
\otimes
\ketbra{1}{1}
\otimes
\ketbra{1}{0}
+
\proj{1}
\otimes
\ketbra{0}{0}
\otimes
\ketbra{0}{1}.
\notag\\
\end{eqnarray}
We explain briefly why $\sr(U_5)=5$, as the  proof is similar to that of constructing the gate in \eqref{eq:3qubit<=5}. First using \eqref{eq:sr5} one can show that $5\ge\sr(U_5)\ge4$. Next if $\sr(U)=4$ then $U_5$ is the linear combination of four product matrices one of which has the form $A \otimes I_2\otimes I_2$. It can be excluded by comparing with \eqref{eq:sr5}. We have shown that $\sr(U)=5$. Using Figure \ref{fig:fredkin}, we can implement $U_5$ using six CNOT gates assisted with local unitary gates in Figure
\ref{fig:sr5}. 

\begin{figure}[ht]
\centering
\includegraphics[width=9cm]{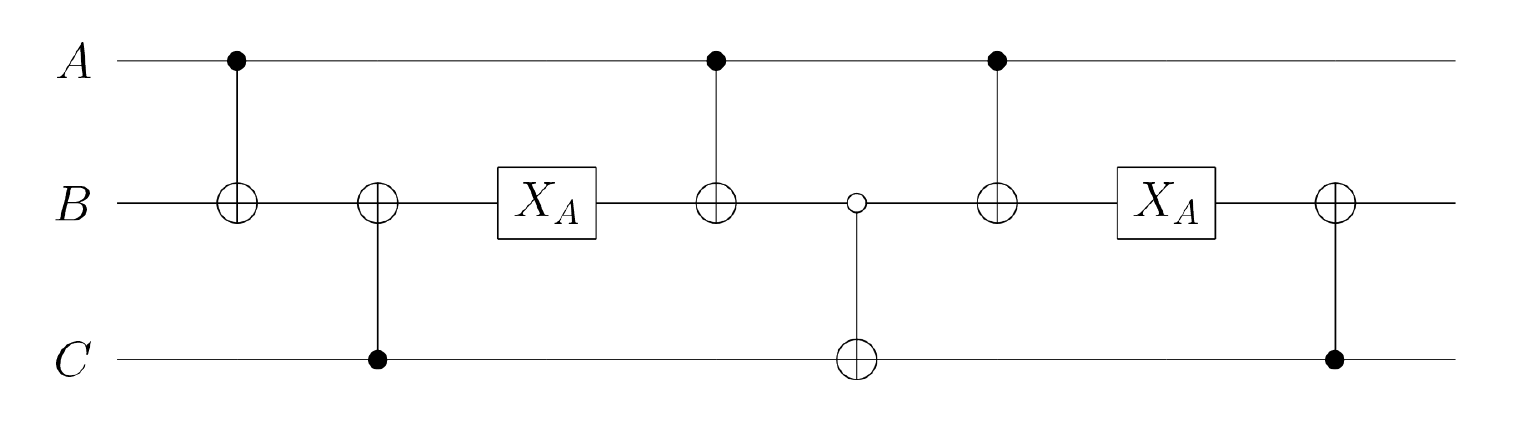}
\caption{The three-qubit gate $U_5$ of Schmidt rank five can be implemented using six CNOT gates and local gates $X_A$ flipping the qutrit $\ket{0}$ and $\ket{2}$.}
\label{fig:sr5}
\end{figure}

Sixth, using the gate $U_3$ in Figure \ref{fig:sr3} and one more CNOT gate, we can construct a three-qubit gate $U_6$ of Schmidt rank six as follows.
\begin{eqnarray}
U_6=&&
(T_{AC}\otimes (I_2)_B)
(H\otimes I_2\otimes I_2 )U_3
\notag\\=&&	
{1\over\sqrt2}
\ketbra{0}{0}
\otimes (I_2\otimes I_2)
\notag\\+&&
{1\over\sqrt2}
\ketbra{0}{1}
\otimes
(\s_1\otimes I_2-2\ketbra{0}{1}\otimes \proj{1})
\notag\\+&&
{1\over\sqrt2}
\ketbra{1}{0}
\otimes I_2\otimes \s_1
\notag\\+&&
{1\over\sqrt2}
\ketbra{1}{1}
\otimes
(2\ketbra{0}{1}\otimes \ketbra{0}{1}-\s_1\otimes \s_1),
\end{eqnarray}
where $H=\bma {\sqrt2\over 2}&{\sqrt2\over 2}\\{\sqrt2\over 2}&-{\sqrt2\over 2} \ema$ is the Hadamard matrix. We explain briefly why $\sr(U_6)=6$, as the  proof is similar to that of constructing the gate in \eqref{eq:ue11}. First Corollary \ref{cr:prod} shows that $6\ge\sr(U_6)\ge4$. Next if $\sr(U_6)\le 5$ then one can show that $\s_1\otimes I_2-2\ketbra{0}{1}\otimes \proj{1}$ and $2\ketbra{0}{1}\otimes \ketbra{0}{1}-\s_1\otimes \s_1$ cannot be in the span of $I_2\otimes I_2$, $I_2\otimes \s_1$ and any three product matrices. We have a contradiction and so $\sr(U_6)=6$. Using Figure \ref{fig:sr3}, we can implement $U_6$ using five CNOT gates assisted with local unitary gates in Figure
\ref{fig:sr6}. 

\begin{widetext}

\begin{figure}[ht]
	\centering
	\includegraphics[width=9cm]{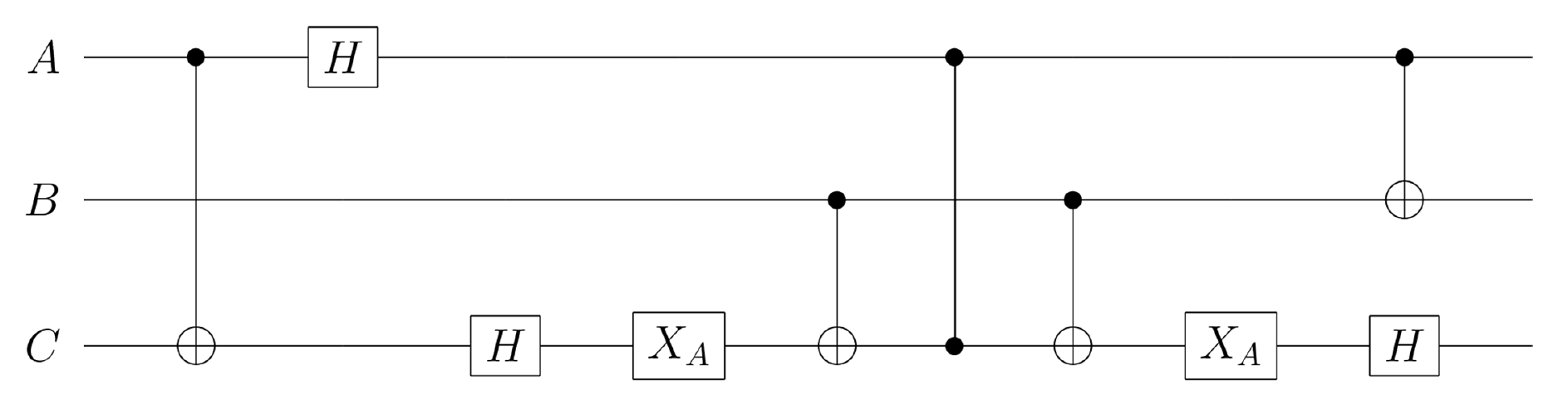}
	\caption{The three-qubit gate $U_6$ of Schmidt rank six can be implemented using five CNOT gates, Hadamard gate $H$ and qutrit gate $X_A$ flipping $\ket{0}$ and $\ket{2}$.}
	\label{fig:sr6}
\end{figure}
\end{widetext}

It remains to implement a three-qubit unitary gate of Schmidt rank seven using CNOT gates as few as possible. Fortunately this is the case by the following theorem. 
\begin{theorem}
\label{thm:three CNOT}
The combination of three CNOT gates can generate a three-qubit unitary gate of Schmidt rank seven. 	
\end{theorem}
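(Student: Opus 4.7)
The plan is to exhibit a concrete three-CNOT circuit whose resulting unitary $U$ has Schmidt rank exactly seven, by arranging for $U$ to be locally equivalent (via invertible local transformations, which preserve Schmidt rank) to the gate $U_7$ of Theorem~\ref{thm:3qubit<=7}. Since tensor rank, and hence Schmidt rank, is invariant under the action of $\GL_2 \otimes \GL_2 \otimes \GL_2$ on each of the three systems, a local equivalence with $U_7$ will immediately force $\sr(U)=7$.

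First, I would fix a circuit of the form
\[
U \;=\; T_{AC}\,L_2\,T_{BC}\,L_1\,T_{AB},
\]
with $L_1, L_2$ local unitaries to be determined (and any outer local factors absorbed freely). Expanding each CNOT as the two-term sum $S_0\otimes I_2 + S_3\otimes\sigma_1$ on its active pair and multiplying out yields an expansion of $U$ into at most $2^3 = 8$ tripartite product terms $A\otimes B\otimes C$, so the upper bound $\sr(U)\le 8$ holds automatically.

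Second, I would choose $L_1 = A_1\otimes B_1\otimes C_1$ and $L_2 = A_2\otimes B_2\otimes C_2$ so that the resulting eight-term expansion is locally equivalent to the eight-term decomposition of $U_7$ written in the $S_i\otimes S_j\otimes S_k$-basis in Theorem~\ref{thm:3qubit<=7}. Matching coefficients reduces this to a finite system of polynomial equations in the six $2\times 2$ matrix unknowns, which has enough degrees of freedom to be consistent. Because $U_7$ is isomorphic to the $4\times 4\times 4$ Strassen tensor of tensor rank exactly seven \cite{Landsberg2011Tensors}, any such solution yields $\sr(U) = 7$.

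The main obstacle is solving the matching equations without landing in a lower-rank degeneration: a careless choice of $L_1, L_2$ can easily collapse several of the eight terms into linear dependence and drag $\sr(U)$ down to six or less, as already happens with the naive three-CNOT composition (which has Schmidt rank at most four). The freedom in $L_1, L_2$, together with the explicit algebraic form of $U_7$ coming from the multiplicative-complexity literature, is what allows one to steer $U$ into the Strassen orbit rather than a generic nearby orbit; once that is done, the lower bound $\sr(U)\ge 7$ is inherited from the Strassen-tensor rank bound and the proof is complete.
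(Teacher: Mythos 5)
Your strategy---expand the three-CNOT circuit into at most eight product terms and then tune the interstitial local gates $L_1,L_2$ so that the resulting gate lands in the $\GL_2\otimes\GL_2\otimes\GL_2$ orbit of the Strassen tensor---is the right general idea, but as written it contains a genuine gap: the existence of suitable $L_1,L_2$ is precisely the content of the theorem, and you never establish it. Saying that coefficient matching ``reduces to a finite system of polynomial equations \ldots which has enough degrees of freedom to be consistent'' is not a proof; parameter counting does not guarantee that a polynomial system is solvable, and you yourself note that careless choices degenerate to Schmidt rank six or less. Until you either exhibit concrete $L_1,L_2$ and verify that the resulting eight-term expansion is a local transform of the $2\times2$ matrix-multiplication tensor, or give an independent argument that the matching system has a solution off the degenerate locus, the decisive step is missing.

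The paper closes exactly this gap by direct computation, and the answer is simpler than your plan anticipates: essentially no interstitial local dressing is needed. Writing each CNOT as $S_0\otimes I_2+S_3\otimes\sigma_1$ on its active pair and multiplying out the \emph{cyclically oriented} product $(T_{AB}\otimes I_C)(I_A\otimes T_{BC})(T_{CA}\otimes I_B)$ of \eqref{eq:s3=sr7} (with Hadamards that can be absorbed as outer local unitaries, which do not change the Schmidt rank) yields the eight-term expansion \eqref{eq:S3} in which the four $A$-factors, the four $B$-factors and the four $C$-factors are each linearly independent, and the tensor is then identified with the Strassen tensor $\sum_{i,j,k}e_{ij}\otimes e_{jk}\otimes e_{ki}$ of rank seven. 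This also undercuts your parenthetical claim that the ``naive three-CNOT composition'' has Schmidt rank at most four: the bare cyclic product of three CNOTs is already the rank-seven gate, so no steering into the Strassen orbit is required. To repair your argument, replace the degrees-of-freedom heuristic with this explicit expansion (or with an explicit choice of $L_1,L_2$ followed by a concrete verification of the Strassen structure).
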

\begin{proof}
Consider the expression $M_3=(T_{AB}\otimes I_C)U_1(I_A\otimes T_{BC})U_2(T_A\otimes I_B\otimes T_C)$ where $U_1=V_1\otimes W_1\otimes X_1$ and $U_2=V_2\otimes W_2\otimes X_2$ are local three-qubit unitary gates, set $X_1=I_2$ and $W_2=I_2$, we have
\begin{eqnarray}
\label{eq:S3}
\notag M_3=&&(S_0V_1V_2S_0\otimes W_1S_0+S_3V_1V_2S_0\otimes \s_1W_1S_0)\otimes X_2\\
\notag +&&(S_0V_1V_2S_0\otimes W_1S_3+S_3V_1V_2S_0\otimes \s_1W_1S_3)\otimes \s_1X_2\\
\notag +&&(S_0V_1V_2S_3\otimes W_1S_0+S_3V_1V_2S_3\otimes \s_1W_1S_0)\otimes X_2\s_1\\
+&&(S_0V_1V_2S_3\otimes W_1S_3+S_3V_1V_2S_3\otimes \s_1W_1S_3)\otimes \s_1X_2\s_1.
\notag\\
\end{eqnarray}
Next, assume $V_1=I_2$, using the Hadamard gate $V_2=H$, it is easy to show that the four martices $S_0V_1V_2S_0$,  $S_0V_1V_2S_3$,  $S_3V_1V_2S_0$ and  $S_3V_1V_2S_3$ in system $A$ are linearly independent. Assume $W_1=I_2$, it is easy to show that the four matrices $W_1S_0$, $W_1S_3$, $\s_1W_1S_0$ and $\s_1W_1S_3$ in system $B$ are linearly independent. Next, assume $X_2=H$ is also a Hadamard gate, and it implies the four matrices $X_2$, $\s_1X_2$, $X_2\s_1$ and $\s_1X_2\s_1$ in system $C$ are linearly independent.

Based on these conditions, we obtain that the three-qubit unitary gate
\begin{eqnarray}
\label{eq:s3=sr7}
M_3=(T_{AB}\otimes I_C)(I_A\otimes T_{BC})(T_{CA} \otimes I_B)
\end{eqnarray}
is isomorphic to the Strassen Tensor and hence it has Schmidt rank seven. We describe \eqref{eq:s3=sr7} in Figure \ref{fig:u7}.
\end{proof}

\begin{figure}[ht]
\centering
\includegraphics[width=5cm]{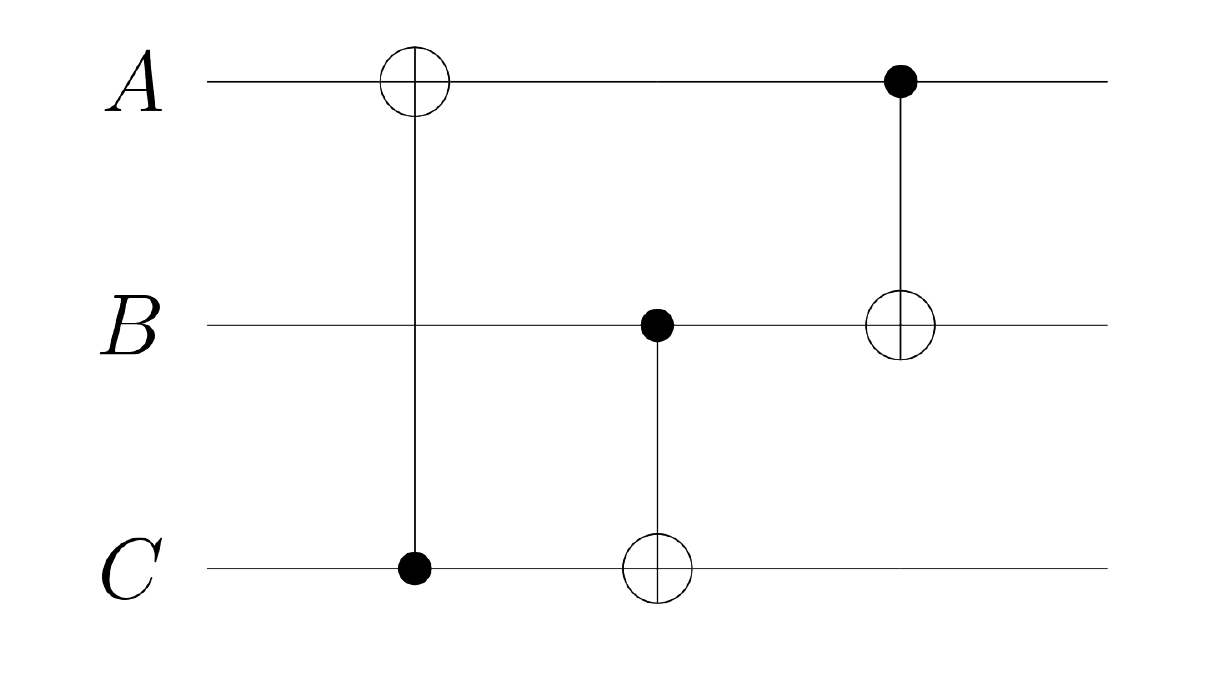}
\caption{The three-qubit gate of Schmidt rank seven consists of three CNOT gates.}
\label{fig:u7}
\end{figure}

Using Theorem \ref{thm:two CNOT}, three CNOT gates are also necessary for implementing any three-qubit unitary gate of Schmidt rank seven. On the other hand,  constructing the gates of Schmidt rank three to six in Figure \ref{fig:sr3}, \ref{fig:fredkin}, \ref{fig:sr5} and \ref{fig:sr6} costs more than three CNOT gates. It is an interesting problem to reduce the numbers or prove their necessity if possible.

\section{Conclusions}
\label{sec:con}

We have constructed three-qubit unitary operations of Schmidt rank from one to seven, respectively. We have implemented them using CNOT gates and local unitary gates. It remains to  determine whether the three-qubit unitary operations of Schmidt rank eight and nine exist, and investigate their extension to multiqubit quantum circuit.

\section*{Acknowledgments}
	\label{sec:ack}	
We thank Shmuel Friedland and Delin Chu for valuable discussions. Authors were supported by the  NNSF of China (Grant No. 11871089), and the Fundamental Research Funds for the Central Universities (Grant No. ZG216S2005).

\bibliographystyle{unsrt}

\bibliography{3qubit}

\end{document}